\newcommand{\N}{{\mathbbm{N}}}
\newcommand{\C}{{\mathbbm{C}}}
\newcommand{\ket}[1]{{|#1\rangle}}
\newcommand{\bra}[1]{{\langle #1 |}}
\newcommand{\braket}[2]{{\langle #1 | #2 \rangle}}
\newcommand{\ketbraUm}[1]{{\ket{#1}\bra{#1}}}
\newcommand{\bolda}{{\mathbf{a}}}
\newcommand{\boldb}{{\mathbf{b}}}
\newcommand{\bolds}{{\mathbf{s}}}
\newcommand{\boldt}{{\mathbf{t}}}
\begin{document}

\title{Embezzlement States are Universal for Non-Local Strategies}
\author{ \bf Mateus de Oliveira Oliveira } 
\institute{Blavatnik School of Computer Science, Tel Aviv University \\ mateusde@post.tau.ac.il} 
\maketitle

\begin{abstract} We prove that the family of embezzlement states defined by van
Dam and Hayden\cite{vanDamHayden2002} is universal for both quantum and classical entangled two-prover non-local 
games with an arbitrary number of rounds. More precisely, we show that for each 
$\varepsilon>0$ and each strategy for a $k$-round two-prover non-local game which 
uses a bipartite shared state on $2m$ qubits and makes the provers win with probability $\omega$, there exists 
a strategy for the same game which uses an embezzlement state on
$2m + 2m/\varepsilon$ qubits and makes the provers  win with probability
$\omega-\sqrt{2\varepsilon}$. Since the value of a game can be defined as the limit of 
the value of a maximal $2m$-qubit strategy as $m$ goes to infinity, 
our result implies that the classes $QMIP^*_{c,s}[2,k]$ and $MIP^*_{c,s}[2,k]$ 
remain invariant if we allow the provers to share only embezzlement states, for any completeness value
$c\in [0,1]$ and any soundness value $s < c$.  Finally we notice that 
the circuits applied by each prover may be put into a very simple universal form.
\end{abstract}

\section{Introduction}

A $k$-round non-local game is an interactive procedure involving a referee, and two 
provers Alice and Bob. At each round the referee randomly selects two questions 
drawn from finite sets and sends one of them to each prover. Subsequently, each of the 
provers replies to her/his question.  Alice and Bob are assumed to be in distinct locations
and not to be able to communicate. In this way, none of them knows which 
question was sent to the other.  At the end of the last round, 
the referee evaluates a publicly known predicate which depends on 
the whole history of questions and answers. 
The provers win if the predicate evaluates to true. The value of a non-local game is 
defined to be the maximum winning probability of Alice and Bob. 

The importance of non-local games is twofold. On one hand they are intimately connected
with multi-prover interactive proof systems 
\cite{Ben-OrGoldwasserKilianWigderson1988}. In these systems a polynomial time verifier
must decide the membership of a string $x$ in a language $L$ through an interactive protocol 
involving several provers which are not allowed to communicate. We say that a language 
$L$ has a $k$-round two-prover interactive proof system if there exists a polynomial time 
function which assigns to each $x$ a $k$-round game $G_x$ in such a way that if $x$ is in $L$ then 
the value of the game $G_x$ is above a threshold $c$, while if 
$x$ is not in $L$, the value of the game $G_x$ is below a threshold $s$ for $s<c$. 
We refer to $c$ as being the completeness of the system, and to $s$ as being 
its soundness. 

On the other hand, by allowing the provers to share a quantum system prepared in an 
arbitrary entangled state, non-local games become a suitable formalism to describe 
experiments that unveil the inherent non-locality of quantum mechanics. 
Following Bell's \cite{Bell1964} observation that some predictions of quantum mechanics 
are inconsistent with local hidden variables theories, several experiments were proposed 
with the aim to provide a decisive test between quantum mechanics and hidden local variables 
theories. As an example, in the CHSH game which is based on a thought experiment of 
Clauser, Horne, Shimony and Holt \cite{ClauserHorneShimonyHolt1969}, 
Bell's work implies that if the provers are classical the value of the game is $0.75$ while if
we allow the provers to share entanglement, there is a strategy which achieves a 
value of $\approx 0.85$. 
In other examples of games, like the Kochen-Specker game \cite{KochenSpecker1967,Strauman2008} and the Mermin-Peres magic square 
game \cite{Mermin1990,Aravind2004,Peres},
any classical strategy is doomed to fail with some probability while
there is a quantum strategy which always allows the provers to win.

When dealing with interactive proof systems it is customary to impose limits on the computational power of the verifier, 
while the provers are assumed to be at most limited by the laws of physics. In this sense, it is reasonable to consider 
interactive proof systems in which the provers are allowed to share arbitrary quantum states.
The study of how entanglement may affect the decidability properties of two-prover interactive 
proof systems was initiated by Cleve, Hoyer, Toner and Watrous \cite{CleveHoyerTonerWatrous2004}. 
They provide several examples of proof systems whose soundness is violated if we allow the provers 
to share an entangled state. Furthermore they provide evidences that 
entanglement may significantly interfere in the decidability properties of multi-prover interactive
proof systems. 
Let $\oplus MIP_{c,s}[2,1]$ denote the class of languages which can be decided by two prover interactive
proof systems in which the final decision of the verifier is taken solely based on the XOR of the 
$1$-bit answers of the provers, and $\oplus MIP_{c,s}^*[2,1]$ be its entangled version. 
Cleve et al. \cite{CleveHoyerTonerWatrous2004} show that $\oplus MIP^*_{c,s}[2,1] \subseteq EXP$, while in the classical case, it follows from works of H{\aa}stad \cite{Hastad2001}
and Bellare, Goldreich and Sudan \cite{BellareGoldreichSudan1998} that $\oplus MIP_{c,s}[2,1]=NEXP$ for certain completeness
and soundness values. Indeed, by 
combining a result of Wehner \cite{Wehner2006} and Jain, Upadhyay and Watrous \cite{JainJiUpadhyayWatrous2009}, 
it is possible to refine the first inclusion to $\oplus MIP^*[2,1] \subseteq PSPACE$. Thus unless, 
$PSPACE=NEXP$ entanglement indeed can weaken the decidability properties of XOR games. Entangled non local games 
were generalized and studied as well in the scenario in which the verifier is allowed to 
be quantum \cite{KempeKobayashiMatsumotoVidick2009, KempeKobayashiMatsumotoTonerVidick2008,KempeRegevToner2010}. 
In particular, some positive aspects of entanglement are explored in \cite{KempeKobayashiMatsumotoVidick2009}, where 
the authors provide some evidence that prior entanglement may be useful for honest provers.  

In order to make 
the study of entangled games slightly easier, it is reasonable to ask whether the bipartite state shared by the provers
may be restricted to a class of states which is easy to describe and to work with. The aim of this work 
is to show that the embezzlement family of states defined by van Dam and Hayden\cite{vanDamHayden2002}, satisfy these criteria. 
More precisely, in Theorem \ref{theorem:SpecialState} we prove that the family of embezzlement states is universal for two-prover non-local 
games with any  number of rounds, in the sense that any strategy for a two-prover non-local game which yields a 
value $\omega$ may be replaced by a strategy for the same game that uses an embezzlement state and that 
yields a value of at least $\omega - \sqrt{2\varepsilon}$ for any $\varepsilon$ with 
$0 < \varepsilon < 1$ with only a linear, in $1/\varepsilon$, overhead  on the number of qubits to be shared. 
Since the value $\omega$ of a game can be defined as the limit of the value of a maximal $2m$-qubit strategy as 
$m$ goes to infinity, this implies that $\omega$ itself is not changed when only embezzlement states are considered. 
As a consequence, the classes $QMIP^*_{c,s}[2,k]$ and $MIP^*_{c,s}[2]$ remain invariant 
through our restriction (Corollary \ref{corollary:Classes}). Finally, as an observation, we note in 
Theorem \ref{theorem:UniversalStrategy} that the circuits applied by the provers may also be put into a very simple 
universal form. 

While in the classical case a series of results \cite{BabaiLundFortnow1991,CaiCondonLipton1990,LapidotShamir1991,Feige1991,FeigeLovasz1992} 
established the relation  $MIP[2,k]=MIP[2,1]=NEXP$ for any $k$, in the setting 
in which the provers share entanglement it makes sense to consider interactive proof systems with 
an arbitrary number of rounds because in this case it is not known whether $MIP^*[2,k]=MIP^*[2,1]$ for
$k\geq 2$. It is also worth noting that the embezzlement family has been already considered (and generalized 
to any constant number of provers) by Leung, Toner and Watrous \cite{LeungTonerWatrous2008} 
and used to prove that if we allow the referee to be quantum, then there are 
one-round games whose value cannot be achieved by means of strategies that share a finite amount of entanglement.
Nevertheless, the embezzlement family seems to have passed unnoticed as a universal family of states for 
non-local games, and in some of the literature concerning entangled multiprover interactive proof systems,
the existence of such family is implicitly stated as an open problem \cite{KobayashiMatsumoto2003}.

The rest of this paper is organized as follows: In Section \ref{section:NonLocalGames} we provide a formal definition of non-local games. In 
Section \ref{section:Embezzlement} we introduce van Dam and Hayden's embezzlement family and 
prove our universality results (Theorems \ref{theorem:SpecialState} and \ref{theorem:UniversalStrategy}, and Corollary \ref{corollary:Classes}).

\section{Non-Local Games} \label{section:NonLocalGames}

A $k$-round two-prover non-local game is an interactive procedure undertaken by a verifier
and two provers which we call Alice and Bob. The game proceeds as follows: Given
two sets of questions $S$ and $T$, two sets of answers $A$ and $B$, and a
predicate $V\subseteq S^k\times T^k\times A^k\times B^k$, at round $i$ the verifier choses a pair of
questions $(s_i,t_i)\in S\times T$ accordingly to a probability distribution $\pi_i$ 
defined on $S\times T$ and sends the question $s_i$ to Alice and the question $t_i$ to Bob. Alice replies with an answer $a_i \in A$ 
and Bob replies with an answer $b_i \in B$. The provers win the game if the history 
$(s_1...s_k,t_1...t_k,a_1...a_k,b_1...b_k)$ of all questions and answers satisfies the predicate $V$. 
The goal of the provers is to follow a strategy that maximizes their winning probability. 
We note that the probability distribution $\pi_i$ with which the verifier choses the questions at
round $i$ may depend on the questions and answers from previous rounds. We denote a $k$-round 
non-local game by $G=(V,\pi)$ where $\pi$ is a set of probability distributions over $S\times T$

\begin{equation}
\label{equation:ProbabilityDistribution}
\pi = \{ \pi_i(s_1...s_i,t_1...t_i,a_1...a_i,b_1...b_i) | 1\leq i \leq k-1\}
\end{equation}

The provers' strategies can be described by Positive Operator Valued Measurements (POVM's). 
Formally, a POVM in $\C^n$ with outcomes in $\mathcal{I}$ is a family of $n$-dimensional operators $M=\{M_i\}_{i\in \mathcal{I}}$ 
satisfying $\sum M_i^{\dagger}M_i=I_{n}$, where $I_n$ is the identity in $\C^n$. Measuring a quantum system 
prepared in a state $\ket{\psi}\in \C^{n}$ according to $M$, yields the outcome $i$ with probability 
$\bra{\psi}M_i^{\dagger}M_i\ket{\psi}$ and post-measurement state $M_i\ket{\psi}/ \bra{\psi}M_i^{\dagger}M_i\ket{\psi} $
  \cite{NielsenChuang2000}.

In a quantum strategy, the provers share a quantum register consisting of $2m$ qubits
prepared in a bipartite state $\ket{\psi}\in \C^{2^m} \otimes \C^{2^m}$ in such a way that each prover holds $m$ of 
the qubits. The state shared by the provers can be assumed to be pure, for if it were mixed, we could simply consider a 
pure state in a higher dimensional Hilbert space. For each question $s_i\in S$ and history of questions and answers $(s_1...s_{i-1}, a_1...a_{i-1})$, 
Alice has a POVM $\{ X_{s_1...s_{i-1}|s_i}^{a_1...a_{i-1}|a_i}\}_{a_i\in A}$ with outcomes in $A$. 
Similarly, for each question $t_i\in T$ and history of questions and answers $(t_1...t_{i-1}, b_1...b_{i-1})$, 
Bob has a POVM $\{ Y_{t_1...t_{i-1}|t_i}^{b_1...b_{i-1}|b_i}\}_{b_i\in B}$ with outcomes in $B$. In a slight abuse of 
notation we simply write $\{X_{s_i}^{a_i}\}_{a_i\in A}$ and $\{Y_{t_i}^{b_i}\}_{b_i\in B}$ whenever the history of the previous rounds is clear. 
A strategy on $2m$-qubits is completely determined by a triple $(\ket{\psi}_{2m},X,Y)$ where $\ket{\psi}$ is the 
shared state, $X$ is the collection of all POVM's of Alice and $Y$ of all POVM's of Bob. 
Let $\ket{\psi}=\ket{\psi_1}$ be the initial state shared by the provers and $\ket{\psi_i}$ be the state shared by the provers at the $i$-th round. The probability with which Alice and Bob reply respectively $a_i$ and $b_i$ at the $i$-th round when questioned 
with $s_i$ and $t_i$ is given by

\begin{equation}
\label{equation:probabilityAnswer}
\bra{\psi_{i}} (X_{s_i}^{a_i}\otimes Y_{t_i}^{b_i})^{\dagger} X_{s_i}^{a_i}\otimes Y_{t_i}^{b_i}\ket{\psi_i}
\end{equation} 

and the new state becomes 

\begin{equation}
\label{equation:recurrenceState}
\ket{\psi_{i+1}} =\frac{ X_{s_i}^{a_i} \otimes Y_{t_i}^{b_i}\ket{\psi_{i}}}{ 
\bra{\psi_i} (X_{s_i}^{a_i}\otimes Y_{t_i}^{b_i})^{\dagger} X_{s_i}^{a_i}\otimes Y_{t_i}^{b_i} \ket{\psi_i}}\; . 
\end{equation} 

As a convention we let boldface letters range over $k$-tuples of elements: $\bolds\in S^k$, $\boldt\in T^k$, $\bolda\in A^k$ and $\boldb\in B^k$.
The value of the strategy $(\ket{\psi},X,Y)$ for the game is defined as

\begin{equation}
\label{equation:ValueGame}
\omega_G(\ket{\psi},X,Y)= \sum_{\bolds,\boldt,\bolda,\boldb} V(\bolds,\boldt,\bolda,\boldb) \prod_{i=1}^k\pi_i(s_i, t_i) 
\prod_{i=1}^k\bra{\psi_{i}} (X_{s_i}^{a_i}\otimes Y_{t_i}^{b_i})^{\dagger} X_{s_i}^{a_i}\otimes Y_{t_i}^{b_i}\ket{\psi_i},
\end{equation}

which by using Equations (\ref{equation:probabilityAnswer}) and (\ref{equation:recurrenceState}), can be rewritten explicitly as a function of 
the initial shared state $\ket{\psi}$ as  

\begin{equation}
\label{equation:ValueGame}
\omega_G(\ket{\psi}_{2m}, X,Y)= \sum_{\bolds, \boldt, \bolda,\boldb } \left[V(\bolds,\boldt,\bolda,\boldb) \prod_{i=1}^k \pi_i(s_i, t_i) \right] 
\bra{\psi} \left( X_{s_1}^{a_1}...X_{s_k}^{a_k} \otimes Y_{t_1}^{b_1}...Y_{t_k}^{b_k} \right)^{\dagger}
 X_{s_1}^{a_1}...X_{s_k}^{a_k} \otimes Y_{t_1}^{b_1}...Y_{t_k}^{b_k} \ket{\psi}. 
\end{equation}

The {\em entangled value} of $G$ is defined as the limit of the maximum value among all $n$-qubit strategies as 
$n$ goes to infinity. 

\begin{equation} 
\label{equation:ValueEntangled} \omega_G^e=\lim_{m\rightarrow
\infty} \max_{\ket{\psi}_{2m},X,Y} \omega_G^e(\ket{\psi}_{2m},X,Y). \end{equation}

Non-local games can be further generalized to the case in which the verifier has quantum 
capabilities. In this case the communication with the provers proceeds through the
exchange of quantum registers. 

\begin{definition}[Quantum Entangled Non-Local Games]
\label{definition:QuantumNonLocalGames}
A $k$-round $2$-prover entangled quantum game $G(V_1,...,V_k)$ is defined by a verifier strategy
$(V_1,...,V_k,V_{k+1})$ where each $V_i$ is a quantum circuit acting on three quantum registers: 
A private quantum register $priv_V$ and two quantum communication registers $com_X$ and $com_Y$. 
A $2m$-qubit strategy for $G$ consists of a bipartite quantum state $\ket{\psi} \in \C^{2^m}\otimes C^{2^m}$, 
and two sequences of quantum circuits $X=(X_1,...,X_k)$ and $Y=(Y_1,...,Y_k)$ (prover's circuits), where each $X_i$ acts on the quantum communication
register $com_X$ and on a private quantum register $priv_X$, and each $Y_i$ is a quantum circuit acting on the communication 
register $com_Y$ and on a private quantum register $priv_Y$. 

The game proceeds as follows: At the start, $priv_V$, $com_X$ and $com_Y$
are initialized to $\ket{0}$ while $priv_X$ and $priv_Y$ are initialized to the bipartite state $\ket{\psi}$. 
The $j$-th round of the game consists in the application of the circuit $V_j$ followed by the application of $X_j$ and $Y_j$ to their 
respective registers. After the $k$-th round, $V_{k+1}$ is applied and the first private qubit $q$ of the verifier is measured in 
the computational basis. 
\end{definition}

The quantum value $\omega_q(\ket{\psi}_{2m},X,Y)$ of a strategy $(\ket{\psi}_{2m},X,Y)$ is defined as the probability with which 
the measured qubit $q$ is $\ket{1}$. Similarly to the classical entangled case, the value of a $k$-round quantum game 
$G=(V_1,...,V_k,V_{k+1})$ is defined as 

\begin{equation}
\label{equation:ValueQuantum}
\omega_G^q= \lim_{m\rightarrow \infty} \max_{\ket{\psi}_{2m},X,Y} \omega_G^q(\ket{\psi}_{2m},X,Y). 
\end{equation}
In the most general case, the circuits corresponding to both the verifier and the provers may contain any kind of 
physically realisable operations. However such circuits may be efficiently simulated by quantum 
circuits consisting only of unitary operations followed by a single measurement \cite{AharonovKitaevNisan1998}. 
Furthermore, by considering higher dimensional Hilbert spaces, we may assume that the state shared by the provers is pure. 

Classical entangled games may be cast as a subclass of quantum entangled games: Each verifier circuit
consists of a measurement of the communication registers $com_X$ and $com_Y$ in the computational basis, 
followed by the application of a permutation of the basis states. 
The formulation of classical entangled two-prover non-local games in terms of predicates is more natural, 
and allow us to define the value of the entangled game by a closed formula, which is completely circuit 
independent. Nevertheless the reformulation of classical 
entangled games as a special case of quantum entangled games is more suitable for the goals of this paper. 
In particular, the proof of Theorem \ref{theorem:SpecialState} turns out to be much simpler in this setting.

\begin{definition}[Quantum (Classical) Entangled Multiprover Interactive Proof Systems]
\label{definition:MultiproverInteractiveProofSystem}
A language $L$ over an alphabet $\Sigma$ can be decided by a $k$-round quantum (classical) entangled two-prover interactive proof system with completeness $c$ and 
soundness $s$ if there exists a deterministic polynomial time algorithm $P$ that on input $x\in \Sigma^*$ constructs the description 
of the circuits of a quantum (classical) entangled $k$-round two-prover non-local game $G=(V_1,...,V_k,V_{k+1})$, such that if $x\in L$ then $\omega_G^q \geq c$ 
($\omega_G^e \geq c$) and if $x\notin L$ then  $\omega_G^q \leq s$  ($\omega_G^e \leq s$). 
\end{definition}

We denote by $QMIP^{*}_{c,s}[2,k]$ and $MIP^{*}_{c,s}[2,k]$ the classes of all languages that have a quantum, resp. classical, entangled 
$k$-round two-prover interactive proof system with completeness $c$ and soundness $s$.

\section{Universality of the Family of Embezzlement States }
\label{section:Embezzlement}

Embezzlement states were defined in \cite{vanDamHayden2002} as follows: 

\begin{equation} 
\ket{\mu}_{2n} = \frac{1}{C}\sum_{j=1}^{2^n}
\frac{1}{\sqrt{j}}\ket{j}_n\ket{j}_n \hspace{1cm} C=\sqrt{\sum_{j=1}^{2^n}{\frac{1}{j}}}  . 
\end{equation}

Let  $\ket{\psi}_{2n} = \sum_{i=1}^{2^m} \alpha_i \ket{\theta_i}\ket{\theta_i}$ be a bipartite $2m$-qubit state written according
to its Schmidt decomposition. Then the state $\ket{\mu}\otimes \ket{\psi}$ admits a Schmidt decomposition of the form

\begin{equation}
\label{equation:TensorProduct}
\sum_{j,i} \gamma_{j,i} \ket{j}\ket{j}\ket{\theta_i}\ket{\theta_i}
\end{equation}

Let $\gamma_{j_1,i_1} \geq \gamma_{j_2,i_2} \geq ... \geq \gamma_{j_{N},i_{N}}$ be the $N=2^n$ largest coefficients of the above Schmidt decomposition. 
Then define the $n-th$ embezzled version of $\ket{\psi}$ to be the state 

\begin{equation}
\label{equation:IntermediaryState}
\ket{E(\psi)}_{2n,2m} = \sum_{r=1}^{2^n} \frac{1}{\sqrt{r}} \ket{j_r}\ket{j_r}\ket{\theta_{i_r}}\ket{\theta_{i_r}}.
\end{equation}

We note that Alice and Bob may transform the state  $\ket{\mu}_{2n}$ into the state $\ket{E(\psi)}_{2n,2m}$ by performing 
only local operations and without communication. First each prover prepares a local ancilla register of size $m$ in the state $\ket{1}_m$, so 
that $\ket{\mu}_{2n}$ becomes $\ket{\mu}_{2n}\otimes\ket{1}_m\ket{1}_m$. Subsequently both Alice and Bob apply a unitary $U$ that maps each basis 
state $\ket{j}_n\ket{1}_m$ to the basis state $\ket{j_r}_n\ket{\theta_{i_r}}_m$, thus transforming 
$\ket{\mu}_{2n}\otimes\ket{1}_m\ket{1}_m$ into $\ket{E(\psi)}_{2n,2m}$. Surprisingly, as stated in the next theorem, by increasing $n$ the state $\ket{E(\psi)}_{2n,2m}$ can 
be made arbitrarily close to $\ket{\mu}_{2n}\otimes\ket{\psi}_{2m}$. 

\begin{theorem}[Embezzlement \cite{vanDamHayden2002}]
\label{theorem:Embezzlement} Let $\ket{\psi}_{2m}=\sum_{j=1}^{2^{m}} \alpha_i \ket{\theta_i}\ket{\theta_i}$ 
be a $2m$ qubit bipartite state written according to its Schmidt decomposition, $\varepsilon$ be such that $0<\varepsilon < 1$;  
and $n,m\in \N$ be such that $n \geq \frac{m}{\varepsilon}$. 
Then $(\bra{\mu}_{2n} \otimes \bra{\psi}_{2m} ) \ket{ E(\psi)}_{2n,2m}  \geq 1-\varepsilon.$
\end{theorem}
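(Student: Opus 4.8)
The plan is to reduce the inner product to a single scalar sum and then control it with harmonic-number estimates. First I would expand both states in the common orthonormal family $\{\ket{j}\ket{j}\ket{\theta_i}\ket{\theta_i}\}$: the tensor product $\ket{\mu}\otimes\ket{\psi}$ carries coefficient $\gamma_{j,i}=\alpha_i/(C\sqrt{j})$ on $\ket{j}\ket{j}\ket{\theta_i}\ket{\theta_i}$, while $\ket{E(\psi)}$ (normalised by the factor $C$ produced by the local unitary $U$) is supported on the sub-family indexed by the $N=2^n$ largest such coefficients. Since all these vectors are orthonormal, the overlap of the two unit vectors collapses to $F=\frac{1}{H_N}\sum_{r=1}^{2^n}\frac{\beta_r}{\sqrt{r}}$, where $H_N=\sum_{j=1}^{2^n}1/j=C^2$ and $\beta_r$ denotes the $r$-th largest element of the multiset $\{\alpha_i/\sqrt{j}\}$. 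Using $\sum_{j=1}^{2^n}1/j=H_N$ together with $\sum_r\beta_r^2=\big(\sum_i\alpha_i^2\big)\big(\sum_j 1/j\big)=H_N$, I would rewrite $1-F=\frac{1}{H_N}\sum_{r=1}^{2^n}\frac{1}{\sqrt{r}}\big(\frac{1}{\sqrt{r}}-\beta_r\big)$, so the entire problem becomes showing that the sorted tensor profile $\beta_r$ tracks the ideal embezzlement profile $1/\sqrt{r}$.

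The combinatorial heart is a lower bound $\beta_r\ge 1/\sqrt{r+d}$, where $d=2^m$, valid for every $r\le N-d$. Indeed, the number of pairs with $\alpha_i/\sqrt{j}\ge 1/\sqrt{r+d}$ equals $\sum_i \#\{1\le j\le N: j\le (r+d)\alpha_i^2\}$; the constraint $r\le N-d$ forces $(r+d)\alpha_i^2\le N$ for every $i$, so no truncation at $N$ occurs and the count is $\sum_i\lfloor (r+d)\alpha_i^2\rfloor\ge (r+d)\sum_i\alpha_i^2 - d = r$, whence $\beta_r\ge 1/\sqrt{r+d}$. Crucially, truncation can only affect the indices $r>N-d$ (it needs $\alpha_i^2>N/(r+d)$, impossible for smaller $r$ since $\alpha_i^2\le 1$), so the dangerous boundary lies near $r=N$, not near $r=1$. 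Substituting the bound controls each term with $r\le N-d$ by $\frac{1}{r}-\frac{1}{\sqrt{r(r+d)}}\ge 0$, while the remaining $d$ indices I would bound trivially by $1/r$, contributing at most $d/(N-d)$, which is negligible after division by $H_N$.

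It then remains to estimate $\sum_{r=1}^{N}\big(\frac{1}{r}-\frac{1}{\sqrt{r(r+d)}}\big)=H_N-\sum_{r=1}^N\frac{1}{\sqrt{r(r+d)}}$. Comparing the second sum with $\int \frac{dr}{\sqrt{r(r+d)}}=2\ln(\sqrt{r}+\sqrt{r+d})$ yields $\sum_{r=1}^N\frac{1}{\sqrt{r(r+d)}}=\ln N-\ln d+O(1)$, so the whole difference is at most $\ln d$ (in fact $\ln d-\Theta(1)$, leaving room). Hence $1-F\le \frac{\ln d}{H_N}\le \frac{\ln d}{\ln N}=\frac{m}{n}\le \varepsilon$, using $H_N\ge \ln N$ and the hypothesis $n\ge m/\varepsilon$. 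The hard part will be the bookkeeping in these last two steps: making the floor/truncation analysis of the counting lemma airtight across the whole range, and pinning the $O(1)$ in the harmonic--integral comparison tightly enough that the leading $\ln d/\ln N$ genuinely dominates, so that $F$ is bounded below by $1-\varepsilon$ rather than merely by $1-\varepsilon-o(1)$; a sanity check on the maximally entangled $\ket{\psi}$, where $1-F\to m/n$ exactly, confirms the constants are favorable and the target is attainable.
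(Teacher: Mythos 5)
The paper you are working against never proves this theorem: it is imported wholesale from van Dam and Hayden, so your attempt can only be compared with the argument of the cited source, whose skeleton you have in fact reproduced. Your reduction of the overlap to $F=\frac{1}{H_N}\sum_{r=1}^{N}\beta_r/\sqrt{r}$ (with $N=2^n$, $d=2^m$, $H_N=\sum_{j\le N}1/j$) is correct, and so is the counting lemma $\beta_r\ge 1/\sqrt{r+d}$; indeed your truncation caveat is unnecessary, because if $\lfloor (r+d)\alpha_i^2\rfloor\ge N$ for some $i$ then that single $i$ already contributes $N\ge r$ elements above the threshold, so the lemma holds for \emph{every} $r\le N$. (A harmless slip: $\sum_r\beta_r^2=H_N$ holds for the full multiset of $dN$ elements, not for the top $N$, where it is only $\le$; you never use it.)

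The genuine gap is in the closing estimates, and it is not repairable bookkeeping: the inequality your chain needs is simply false when $m$ is small. For $d=2$ one computes $\sum_{r=1}^{\infty}\bigl(\tfrac{1}{r}-\tfrac{1}{\sqrt{r(r+2)}}\bigr)\approx 0.88>\ln 2\approx 0.69$, so the claim that $H_N-\sum_{r=1}^{N}\tfrac{1}{\sqrt{r(r+d)}}$ is ``at most $\ln d$, in fact $\ln d-\Theta(1)$'' fails at $d=2$: the asymptotic constant $\gamma-2\ln 2\approx-0.81$ is favorable, but for small $d$ it is overwhelmed by the boundary term $2\ln\bigl((1+\sqrt{1+d})/\sqrt{d}\bigr)\approx 1.3$. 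Concretely, with your trivial tail bound the chain already fails at $(m,n)=(1,4)$, giving $1-F\le 0.96/3.38\approx 0.28>0.25=\varepsilon$; even after repairing the tail via the observation above, it fails for every $n\ge 5$ (at $(m,n)=(1,5)$: $0.85/4.06\approx 0.21>0.2$), and asymptotically it overshoots the target $m/n$ by the factor $0.88/\ln 2\approx 1.27$. The theorem itself is true in this regime (for the maximally entangled qubit pair $1-F\approx 0.11$ at $(1,5)$), so what breaks is your relaxation $\beta_r\mapsto 1/\sqrt{r+d}$: for small $d$ it roughly doubles the true deficit (from $\approx 0.49$ to $\approx 0.88$ when $d=2$), and the bound being proved is constant-free, leaving no room to absorb that loss. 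Your sanity check is the tell: for the maximally entangled state $1-F$ does \emph{not} tend to $m/n$ but to roughly $(m\ln 2-1.16)/(n\ln 2)$, strictly below it, and it is exactly this hidden slack that a correct proof must preserve and your relaxation destroys. Completing the proof therefore needs a new ingredient for small $d$ --- a sharper lower bound on $\beta_r$ exploiting the per-index block structure of the multiset (for instance $\beta_r\ge\max_i\alpha_i/\sqrt{r}$, and refinements), or a separate direct argument for $m=1$ and a careful check at $m=2$, where your inequality survives with margin only about $0.09$.
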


To show our main theorem, we need some more notation: 
The trace distance between two states $\ket{\psi}$ and $\ket{\phi}$ in $\C^n$ is defined as 
$D(\ket{\psi},\ket{\phi})=\frac{1}{2}tr|(\ketbraUm{\psi}-\ketbraUm{\phi} )|$ where $|A|\equiv \sqrt{A^{\dagger}A}$. 
If $\{M_i\}_{i\in \mathcal{I}}$ is a POVM with outcomes in $\mathcal{I}$ and $p_i=\bra{\psi}M_i^{\dagger}M_i\ket{\psi}$ and $q_i=\bra{\phi} M_i^{\dagger}M_i \ket{\phi}$  
are the probability distributions induced by the measurement on $\ket{\psi}$ and $\ket{\phi}$ respectively, 
then $D(p_i,q_i)\leq D(\ket{\psi},\ket{\phi})$ where $D(p_i,q_i)=\frac{1}{2}\sum_i |p_i - q_i|$ is the classical total variance distance 
between the probability distributions $p_i$ and $q_i$ (see for example theorem $9.1$ of \cite{NielsenChuang2000} for a proof). 
In other words if two quantum states are close in trace distance, then any measurement performed on those states will give rise to probability
distributions which are close in the classical sense. Also it can be proved that 
$D(\ket{\psi}, \ket{\phi}) \leq \sqrt{1-\braket{\psi}{\phi}^2}$ and thus 
if $\braket{\psi}{\phi}\geq 1-\varepsilon$, then $D(\ket{\psi},\ket{\phi}) < \sqrt{2\varepsilon}$. 

Next we prove our main theorem. It says that the value of a quantum strategy for a quantum entangled non-local game in which 
the provers share a pure state $\ket{\psi}$ on $2m$ qubits can be arbitrarily approximated by the value of a strategy 
in which the provers share an embezzlement state. Since classical entangled games can be regarded as a special case 
of quantum entangled games, Theorem \ref{theorem:SpecialState} holds also in the classical entangled setting. 

\begin{theorem} 
\label{theorem:SpecialState} 
Let $(\ket{\psi}_{2m},X,Y)$ be a $2m$-qubit quantum strategy for a $k$-round two-prover non-local game $G(V_1, ... ,V_k, V_{k+1})$. 
Then for any $\varepsilon$ with $0 < \varepsilon < 1$ there exists a $2m(1+1/\varepsilon)$-qubit strategy 
$(\ket{\mu}_{2m/\varepsilon}\otimes\ket{1}_m\ket{1}_m ,X',Y')$ such that $\omega_G^q(\ket{\mu}_{2m/\varepsilon}\ket{1}_m\ket{1}_m, X', Y') \geq 
\omega_{G}^q(\ket{\psi}_{2m}, X, Y) - \sqrt{2\varepsilon}$.  
\end{theorem}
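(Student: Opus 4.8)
The plan is to let the new provers start from the embezzlement state, use the Embezzlement Theorem to manufacture a state close to $\ket{\mu}_{2m/\varepsilon}\otimes\ket{\psi}_{2m}$ by local unitaries, and then simply run the original strategy on the freshly created copy of $\ket{\psi}_{2m}$ while leaving the embezzlement register as an untouched spectator; the trace-distance machinery recalled before the statement then transfers the winning probability up to the claimed additive loss. Concretely, I would set $n=m/\varepsilon$ and define the new prover circuits $X'=(X_1',\dots,X_k')$ and $Y'=(Y_1',\dots,Y_k')$ so that in the first round each prover prepends the local embezzlement unitary $U$ described before Theorem \ref{theorem:Embezzlement}, making the joint action $U\otimes U$ send $\ket{\mu}_{2n}\otimes\ket{1}_m\ket{1}_m$ to $\ket{E(\psi)}_{2n,2m}$; after this, $X_i'$ acts as $X_i$ and $Y_i'$ acts as $Y_i$ on the $2m$-qubit register that now carries the embezzled copy of $\ket{\psi}$ together with the respective communication register, and acts as the identity on the $2n$-qubit embezzlement register throughout all rounds. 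Thus the embezzlement register is touched by neither the provers nor any verifier circuit, remaining a passive spectator during the whole protocol.

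Next I would record the exact identity that holds if the provers instead shared $\ket{\mu}_{2n}\otimes\ket{\psi}_{2m}$ and ran $X,Y$ on the $\ket{\psi}_{2m}$ register: because the embezzlement register is never acted upon, the evolution of the remaining registers and the final measurement of the verifier's qubit $q$ coincide exactly with those of the original strategy, so this idealized value equals $\omega_G^q(\ket{\psi}_{2m},X,Y)$. I would then quantify the gap between the state actually produced, $\ket{E(\psi)}_{2n,2m}$, and this idealized state. By Theorem \ref{theorem:Embezzlement} with $n\geq m/\varepsilon$ their overlap is at least $1-\varepsilon$, so by the inequality $D(\ket{\phi_1},\ket{\phi_2})\leq\sqrt{1-\braket{\phi_1}{\phi_2}^2}$ recalled before the statement we obtain $D(\ket{E(\psi)}_{2n,2m},\ket{\mu}_{2n}\otimes\ket{\psi}_{2m})<\sqrt{2\varepsilon}$.

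Finally I would view the entire protocol — the alternating application of the verifier circuits $V_1,\dots,V_{k+1}$ and the original prover circuits $X,Y$, followed by the computational-basis measurement of $q$ — as a single two-outcome POVM applied to the prover state, with the verifier and communication registers fixed to $\ket{0}$. Since the two runs differ only in whether the prover state is $\ket{E(\psi)}_{2n,2m}$ or $\ket{\mu}_{2n}\otimes\ket{\psi}_{2m}$, their winning probabilities are outcome probabilities of this fixed POVM and hence differ by at most the trace distance of the two input states, which is $<\sqrt{2\varepsilon}$. Combining this with the identity above yields $\omega_G^q(\ket{\mu}_{2m/\varepsilon}\otimes\ket{1}_m\ket{1}_m,X',Y')\geq\omega_G^q(\ket{\psi}_{2m},X,Y)-\sqrt{2\varepsilon}$, as required.

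The main obstacle is precisely this last step: one must justify that comparing the two games reduces to a single measurement on the two candidate input states, i.e. that all the intervening prover and verifier circuits together with the final measurement collapse into one POVM to which the contractivity bound $D(p_i,q_i)\leq D(\ket{\phi_1},\ket{\phi_2})$ applies. This is exactly why, as noted in the excerpt, the quantum-circuit formulation is more convenient than the predicate formulation: the value is literally one outcome probability of a fixed composite operation, so the trace-distance bound is immediate and no separate round-by-round accounting of Equations (\ref{equation:probabilityAnswer}) and (\ref{equation:recurrenceState}) is needed.
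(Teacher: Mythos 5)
Your proposal is correct and follows essentially the same route as the paper's own proof: an exact-equality step via tensoring the original strategy with the identity on the embezzlement register, the Embezzlement Theorem plus trace-distance contractivity applied to the whole protocol viewed as one unitary followed by the verifier's final measurement, and a local unitary transplanting the strategy onto $\ket{\mu}_{2m/\varepsilon}\otimes\ket{1}_m\ket{1}_m$. The only differences are cosmetic: you prepend the local unitary $U$ once in the first round rather than conjugating every prover circuit as $U\overline{X}U^{\dagger}$, and you fold the global unitary $U_G$ into a single two-outcome POVM instead of keeping it separate, both of which are equivalent to the paper's argument.
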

\begin{proof}
Let $(\ket{\mu}_{2m/\varepsilon}\otimes\ket{\psi}_{2m}, \overline{X}, \overline{Y})$ be a strategy for $G$ 
where $\ket{\mu}_{2m/\varepsilon}$ is the embezzlement state and $\overline{X}$ and $\overline{Y}$ are obtained 
by tensoring each circuit in $X$ and each circuit in $Y$ with the identity on $m/\varepsilon$ qubits acting on 
half of the qubits of $\ket{\mu}_{2m/\varepsilon}$. Then clearly 
$\omega_G^q(\ket{\mu}_{2m/\varepsilon}\otimes \ket{\psi}_{2m}, \overline{X},\overline{Y}) = \omega_G^q(\ket{\psi}_{2m},X,Y)$.
By Definition \ref{definition:QuantumNonLocalGames}, the interplay of the verifier's strategy with the provers's strategies,
prior to the final measurement of the verifier, may be regarded as the application of a single unitary $U_G$ to a pure state.
Let $\ket{E(\psi)}_{2m/\varepsilon,2m}$ be the embezzled version of $\ket{\psi}$ as defined in Equation (\ref{equation:IntermediaryState}) and 
set $\ket{\phi}=U_G \ket{E(\psi)}_{2m/\varepsilon,2m}$ and $\ket{\phi'}= U_G \ket{\mu}_{2m/\varepsilon}\otimes \ket{\psi}$. 
Since by Theorem \ref{theorem:Embezzlement}, $(\bra{\mu}_{2m/\varepsilon,2m} \otimes \bra{\psi}_{2m})\ket{E(\psi)}_{2m/\varepsilon,2m} \geq 1 -\varepsilon$, 
we have $\braket{\phi}{\phi'} \geq 1 -\varepsilon$ and the trace distance $D(\ket{\psi},\ket{\phi}) < \sqrt{2\varepsilon}$. 
Let $\{M_i\}_{i\in \mathcal{I}}$ be the POVM measurement made by the verifier in the end of the $k$-th round and let 
$p_i=\bra{\phi}M_i^{\dagger}M_i \ket{\phi}$ and $q_i=\bra{\phi'}M_i^{\dagger}M_i \ket{\phi'}$. Then 
$D(p_i,q_i)\leq D(\ket{\phi},\ket{\phi'}) \leq \sqrt{2\varepsilon}$. 
Finally there is a unitary $U$ such that $U\otimes U \ket{E(\psi)}_{2m/\varepsilon,2m}=\ket{\mu}_{2m/\varepsilon}\otimes \ket{1}_m\ket{1}_m$ 
where one of the $U$'s acts on Alice's qubits and the other on Bob's qubits. Then the final strategy is 
$(\ket{\mu}_{2m/\varepsilon,2m},X',Y')$ where $X'=U\overline{X}U^{\dagger}$ and $Y'=U\overline{Y}U^{\dagger}$.
$\square$
\end{proof}

As pointed out in the introduction, Leung, Toner and Watrous \cite{LeungTonerWatrous2008} showed that there are quantum entangled games whose value 
is never attained by a strategy whose shared state has a constant number of qubits, and thus the limit in Equation (\ref{equation:ValueQuantum}) is 
fundamental. It is still not known whether the same situation holds for classical entangled games. 
Despite the fact that Theorem \ref{theorem:SpecialState} concerns only strategies 
with a finite number of qubits, it is still possible to prove that the limit in Equations \ref{equation:ValueEntangled} and  \ref{equation:ValueQuantum}
does not change if we consider only embezzlement states. This in particular implies that the classes $QMIP^*_{c,s}[2,k]$  and $MIP^*_{c,s}[2,k]$ remain 
invariant if we allow the provers to share only embezzlement states. Let $QMIP^{E^*}_{c,s}[2,k]$ ($MIP^{E^*}[2,k]$) be the class 
of languages that can be decided by quantum (classical) entangled $k$-round two-prover interactive proof systems whose provers are only allowed to share embezzlement states. 

\begin{corollary}
\label{corollary:Classes}
For any completeness value $c\in[0,1]$ and any soundness value $s<c$,  
$QMIP^{E^*}_{c,s}[2,k]$ ($MIP^{E^*}_{c,s}[2,k]$) is equal to $QMIP^{*}_{c,s}[2,k]$ ($MIP^{*}_{c,s}[2,k]$). 
\end{corollary}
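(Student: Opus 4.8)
The plan is to reduce the statement about complexity classes to an equality between game values, and then read off the equality of classes directly from Theorem \ref{theorem:SpecialState}. First I would record the trivial inclusion: since every embezzlement strategy is in particular a legal strategy in the unrestricted model, any proof system witnessing membership in $QMIP^{E^*}_{c,s}[2,k]$ already witnesses membership in $QMIP^{*}_{c,s}[2,k]$, so $QMIP^{E^*}_{c,s}[2,k] \subseteq QMIP^{*}_{c,s}[2,k]$ (and likewise in the classical case). All the content lies in the reverse inclusion.

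For that inclusion, the key point is that restricting the shared state to the embezzlement family does not change the value of any fixed game $G$. Write $\omega_G^{q}$ for the unrestricted quantum value and let $\omega_G^{q,E}$ denote the analogous quantity in which the supremum is taken only over strategies whose shared state is an embezzlement state $\ket{\mu}_{2n}$ (the fixed local ancillas $\ket{1}_m\ket{1}_m$ appearing in Theorem \ref{theorem:SpecialState} being absorbed into the private registers, so that the genuinely shared resource is $\ket{\mu}_{2n}$). I would prove $\omega_G^{q,E} = \omega_G^{q}$. The inequality $\omega_G^{q,E} \le \omega_G^{q}$ is immediate, since embezzlement strategies form a subfamily of all strategies. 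For the reverse inequality, fix $\delta>0$. As $\omega_G^{q}$ is the supremum over $m$ of the best $2m$-qubit value, there is a finite $m$ and a strategy $(\ket{\psi}_{2m},X,Y)$ with $\omega_G^{q}(\ket{\psi}_{2m},X,Y) \ge \omega_G^{q}-\delta$. Applying Theorem \ref{theorem:SpecialState} with any $\varepsilon$ satisfying $\sqrt{2\varepsilon}\le\delta$ produces an embezzlement strategy on $\ket{\mu}_{2m/\varepsilon}$ of value at least $\omega_G^{q}-2\delta$, whence $\omega_G^{q,E}\ge \omega_G^{q}-2\delta$; letting $\delta\to0$ gives $\omega_G^{q,E}\ge\omega_G^{q}$. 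The same argument, using that classical entangled games are a special case of quantum ones and that Theorem \ref{theorem:SpecialState} therefore applies verbatim, establishes the corresponding equality for the classical value $\omega_G^{e}$.

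With value equality in hand, the corollary follows mechanically. Given $L\in QMIP^{*}_{c,s}[2,k]$, let $P$ be the polynomial-time algorithm producing the games $G_x$; I would keep exactly the same $P$. Because $\omega_{G_x}^{q,E}=\omega_{G_x}^{q}$ for every input $x$, the completeness condition $\omega_{G_x}^{q}\ge c$ for $x\in L$ and the soundness condition $\omega_{G_x}^{q}\le s$ for $x\notin L$ transfer unchanged to the embezzlement-restricted value; hence the very same proof system decides $L$ with completeness $c$ and soundness $s$ when the provers are confined to embezzlement states, so $L\in QMIP^{E^*}_{c,s}[2,k]$. The classical case is identical with $\omega^{e}$ in place of $\omega^{q}$.

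The step that needs the most care is the limit bookkeeping inside the value-equality argument. Theorem \ref{theorem:SpecialState} converts a $2m$-qubit strategy into one consuming the larger register $\ket{\mu}_{2m/\varepsilon}$, so I must check that these enlarged strategies are genuinely captured by the supremum defining $\omega_G^{q,E}$ (they are, since that supremum ranges over all $n$) and that shrinking $\varepsilon$ while the near-optimal $m$ stays fixed suffices to drive the loss $\sqrt{2\varepsilon}$ to zero without re-optimizing $m$. I would emphasize that, as in the situation for general states highlighted by Leung, Toner and Watrous, no single embezzlement state need attain the value; only the limit is meaningful, which is precisely why the statement is phrased at the level of the classes rather than of individual finite strategies.
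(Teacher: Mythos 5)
Your proposal is correct and follows essentially the same route as the paper: both reduce the class equality to showing that restricting the provers to embezzlement states leaves the value of each game $G_x$ unchanged, and both obtain this by applying Theorem \ref{theorem:SpecialState} to near-optimal finite-dimensional strategies (with the classical case inherited from the quantum one). The only difference is bookkeeping: your single supremum/$\delta$ argument cleanly subsumes the paper's two-case analysis (value attained by a finite strategy versus not), and your explicit treatment of the trivial inclusion and of the completeness/soundness transfer makes precise what the paper leaves implicit.
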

\begin{proof}
Let $L$ be a language in $QMIP^*_{c,s}[2,k]$ ($MIP^*_{c,s}[2,k]$). It is enough to prove that for any $x\in L$ the value of the game $G_x$ associated 
to $x$ remains the same if we restrict the state shared by the provers to belong to the embezzlement family. 
Since the proof holds both for classical entangled and for quantum entangled games, we write simple $\omega_{G_x}$ 
for the value of $G_x$.
If $\omega_{G_x}$
is reached by a strategy in which the provers share a finite dimensional state $\ket{\psi}_{2m}$, then by Theorem \ref{theorem:SpecialState}
there exist a sequence of strategies sharing states $\ket{\mu}_{2n}\otimes\ket{1}_m\ket{1}_m$ whose value approaches $\omega_{G_x}$  as 
$n\rightarrow \infty$. 
Now suppose that there is no finite dimensional strategy whose value is $\omega_{G_x}$, and let $\omega_2,\omega_4,...,\omega_{2m},...$ 
be an infinite sequence where $\omega_{2m}$ is the maximum value among strategies sharing a quantum states on $2m$ qubits. Then by Theorem 
\ref{theorem:SpecialState}, for any two such consecutive values $\omega_{2(m-1)}$ and $\omega_{2m}$, and for a small enough $\varepsilon$, 
there exists a strategy on $2(1+1/\varepsilon)m$ qubits whose value $\omega \geq \omega_{2m} - \sqrt{2\varepsilon}$ is between 
$\omega_{2(m-1)}$ and  $\omega_{2m}$. 
$\square$
\end{proof}

In Theorem \ref{theorem:UniversalStrategy} we state a dual of Theorem \ref{theorem:SpecialState} which says that the circuits applied by the provers 
can be put into a universal form.

\begin{theorem}[Universal Strategy]
\label{theorem:UniversalStrategy}
For each $k$ and each $\varepsilon>0$ there is a universal set of $k$-round circuits $\{(\mathcal{X}_M,\mathcal{Y}_M)\}_{M\in \N}$ such that 
for each $k$-prover non-local game $G$ and each strategy $(\ket{\psi}_{2m},X,Y)$, there is a $M\in \N$ and a state $\ket{\psi}_{2m}\ket{A}_M\ket{B}_M$, such that 
$$\omega_G(\ket{\psi}_{2m}\ket{A}_M\ket{B}_M,\mathcal{X}_M,\mathcal{Y}_M)\geq \omega_G(\ket{\psi}_{2m}, X,Y) -\varepsilon.$$
\end{theorem}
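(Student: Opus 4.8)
The plan is to make the circuits game-independent by realising $(\mathcal{X}_M,\mathcal{Y}_M)$ as a pair of fixed \emph{programmable interpreter} circuits whose behaviour is dictated by a classical description of the desired strategy stored in the registers $\ket{A}_M$ and $\ket{B}_M$. First I would fix once and for all a finite universal gate set $\mathcal{G}$ and recall (Solovay--Kitaev) that every unitary $X_i$ and $Y_i$ of the given strategy can be approximated in operator norm, to within any prescribed accuracy $\delta$, by a finite sequence of gates from $\mathcal{G}$ acting on $com_X\otimes priv_X$ (resp.\ $com_Y\otimes priv_Y$). The concatenation over the $k$ rounds of the labels of these gates, together with the indices of the qubits on which they act, is a finite classical string; I would let $\ket{A}_M$ (resp.\ $\ket{B}_M$) be exactly this string encoded as a computational-basis state in Alice's (resp.\ Bob's) program register, with $M$ chosen large enough to hold the communication register, the relevant half of $\ket{\psi}$, the program, and enough scratch space.

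Next I would define the universal circuit $\mathcal{X}_M$ for round $j$ as a fixed interpreter that reads the $j$-th block of the program register and applies the gate from $\mathcal{G}$ named there to the work qubits indexed there. Each ``apply the programmed gate'' step is a single fixed unitary $\sum_{g,q}\ketbra{g,q}{g,q}\otimes g_{q}$, where the sum ranges over gate labels $g\in\mathcal{G}$ and tuples $q$ of target indices, $g_q$ denotes $g$ applied to the work qubits indexed by $q$, and the first tensor factor is the program register acting as control; since $\mathcal{G}$ is finite and the register sizes are fixed by $M$, this is a well-defined game-independent unitary. Because the program $\ket{A}_M$ is a \emph{definite} basis state rather than a superposition, the no-programming obstruction of Nielsen and Chuang does not apply: feeding $\mathcal{X}_M$ the program $\ket{A}_M$ reproduces exactly the chosen gate sequence, hence the approximation $\tilde{X}_i$ of $X_i$ on $com_X\otimes priv_X$. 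The symmetric construction on Bob's side gives $\mathcal{Y}_M$ and $\ket{B}_M$.

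It then remains to bound the loss in value. Writing $U_G$ for the overall unitary implementing the verifier--prover interplay with the original circuits and $\tilde{U}_G$ for the one obtained after replacing each $X_i,Y_i$ by its programmed approximation, a telescoping estimate gives $\|U_G-\tilde{U}_G\|\leq \sum_{i=1}^{k}\left(\|X_i-\tilde{X}_i\|+\|Y_i-\tilde{Y}_i\|\right)\leq 2k\delta$, since the verifier circuits are untouched. Applying both to the fixed initial state and passing to trace distance as in the preamble to Theorem \ref{theorem:SpecialState}, the probability that the verifier's final measurement returns $\ket{1}$ changes by at most $2k\delta$; choosing $\delta=\varepsilon/(2k)$ makes this at most $\varepsilon$, and the corresponding value of $M$ selects the element of the universal family. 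I expect the only genuinely delicate point to be building the interpreter in a truly game-independent way—in particular, standardising the position and maximal size of the communication register $com_X$ inside the $M$-qubit workspace so that one fixed $\mathcal{X}_M$ interfaces correctly with \emph{every} verifier whose game fits the budget $M$; the Solovay--Kitaev approximation and the error propagation are then routine.
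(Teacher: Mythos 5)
Your proposal is correct and follows essentially the same route as the paper's proof: approximate each $X_i,Y_i$ via a finite universal gate set (Solovay--Kitaev), store the resulting gate sequence as a computational-basis ``program'' in the ancilla registers $\ket{A}_M,\ket{B}_M$, have a fixed game-independent circuit apply the programmed gates by controlled operations, and let the per-round errors accumulate additively over the $k$ rounds. The only cosmetic difference is that the paper hard-wires the gate positions via a nearest-neighbour layout (adding $SWAP$ to the gate set) so the program merely selects which of four gates acts at each fixed pair of adjacent qubits, whereas your interpreter also reads target-qubit indices from the program; your explicit telescoping bound $\|U_G-\tilde{U}_G\|\leq 2k\delta$ with $\delta=\varepsilon/(2k)$ is a slightly more careful version of the paper's one-line error accounting.
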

\begin{proof}
Any unitary matrix acting on $d$ qubits can be $\varepsilon$-approximated by a circuit with $poly(2^d,\log{1/\varepsilon})$ gates from the universal 
set of gates $\{CNOT, H, \pi/8\}$  \cite{BoykinMorPulverRoychowdhuryVatan1999}. By adding the $SWAP$ gate to this set, such circuits
can be put into a nearest neighbor configuration, in which the $CNOT$ and $SWAP$ operate only on adjacent pair of qubits.
Alice and Bob hold two registers each: one working register with $d=m+v$ qubits, where $v$ is the size of the communication register with the verifier, 
and an ancilla register of size $M=k\cdot poly(2^d,\log{k/\varepsilon})$ divided into $k$ regions with equal number of qubits. 
Each prover regards the state of the $j$-th region of her/his  ancilla register as a program which will determine the unitary that will be applied to her/his working register
at the $j$-th round. More precisely, each prover applies a circuit
of the form $C_MC_{M-1}...C_1$ where each $C_i$ is a controlled gate which applies one of the four gates $SWAP,CNOT,I\otimes H$ or $I\otimes \pi/8$ to 
qubits $2i (\!\!\!\mod d)$ and $2i +1 (\!\!\!\mod d)$ \footnote{For simplicity we assume that the first and last qubits of the working register are adjacent.} 
 of the working register depending whether the state of qubits $2i$ and $2i+1$ of the 
ancilla register is $\ket{00},\ket{01},\ket{10}$ or $\ket{11}$ respectively. For some configuration of $\ket{A},\ket{B}$ of 
the ancillae registers of Alice and Bob respectively, each unitary in $\mathcal{X}_M$ (resp. $\mathcal{Y}_M$) will be $\varepsilon/k$-close to its 
corresponding unitary in $X$ (resp. $Y$). Since there are $k$ rounds and the errors accumulate additively, 
$\omega_G(\ket{\psi}\ket{A}\ket{B},\mathcal{X}_{M},\mathcal{Y}_{M}) \geq \omega(\ket{\psi},X,Y)-\varepsilon$. 
 $\square$
\end{proof}

\section{Acknowledgements}

The author thanks Julia Kempe and Thomas Vidick for their valuable revisions on drafts of this paper, and Andr\'{e} Chailloux, Iordanis Kerenidis and Fr\'{e}d\'{e}rick Magniez, 
for useful discussions.  The author also acknowledges support by Julia Kempe's Israel Science Foundation grant and by Julia Kempe's European 
Research Council (ERC) Starting  Grant QUCO as well as by the Wolfson Family Charitable Trust.

\bibliographystyle{abbrv}
\bibliography{embezzlementUniversal}

\begin{thebibliography}{10}

\bibitem{AharonovKitaevNisan1998}
D.~Aharonov, A.~Kitaev, and N.~Nisan.
\newblock Quantum circuits with mixed states.
\newblock In {\em Proceedings of the 30nd ACM Symposium on Theory of Computing
  ({STOC})}, pages 20--30, 1998.

\bibitem{Aravind2004}
P.~K. Aravind.
\newblock Quantum mysteries revisited again.
\newblock {\em American Journal of Physics}, 72:1303--1307, 2004.

\bibitem{BabaiLundFortnow1991}
L.~Babai, L.~Fortnow, and C.~Lund.
\newblock Non-deterministic exponential time has two-prover interactive
  protocols.
\newblock {\em Computational Complexity}, 1:3--40, 1991.

\bibitem{Bell1964}
J.~Bell.
\newblock On the {E}instein-{P}odolsky-{R}osen paradox.
\newblock {\em Physics}, 1:3:195--200, 1964.

\bibitem{BellareGoldreichSudan1998}
Bellare, Goldreich, and Sudan.
\newblock Free bits, {PCP}s, and nonapproximability--towards tight results.
\newblock {\em SICOMP: SIAM Journal on Computing}, 27, 1998.

\bibitem{Ben-OrGoldwasserKilianWigderson1988}
M.~Ben-Or, S.~Goldwasser, J.~Kilian, and A.~Wigderson.
\newblock Multi-prover interactive proofs: How to remove intractability
  assumptions.
\newblock In {\em Proceedings of the 20th Annual ACM Symposium on Theory of
  Computing ({STOC})}, pages 113--131, 1988.

\bibitem{BoykinMorPulverRoychowdhuryVatan1999}
P.~O. Boykin, T.~Mor, M.~Pulver, V.~Roychowdhury, and F.~Vatan.
\newblock On universal and fault-tolerant quantum computing.
\newblock In {\em Proceedings of the 40th Annual IEEE Symposium on Foundations
  of Computer Science ({FOCS})}, pages 486--494. Society Press, 1999.

\bibitem{CaiCondonLipton1990}
J.-Y. Cai, A.~Condon, and R.~J. Lipton.
\newblock On bounded round multi-prover interactive proof systems.
\newblock In {\em Prooceedings of the Structure in Complexity Theory Conference
  ({C}o{C}o)}, pages 45--54, 1990.

\bibitem{ClauserHorneShimonyHolt1969}
J.~Clausser, M.~Horne, A.~Shimony, and R.~Holt.
\newblock Proposed experiment to test local hidden-variable theories.
\newblock {\em Physical Review Letters}, 23:880--884, 1969.

\bibitem{CleveHoyerTonerWatrous2004}
R.~Cleve, P.~H{\o}yer, B.~Toner, and J.~Watrous.
\newblock Consequences and limits of nonlocal strategies.
\newblock In {\em Proceedings of the 19th IEEE Conference on Computational
  Complexity ({CCC})}, pages 236--249. IEEE Computer Society, 2004.

\bibitem{Feige1991}
U.~Feige.
\newblock On the success probability of the two provers in one-round proof
  systems.
\newblock In {\em Proceedings of the Structure in Complexity Theory Conference
  ({C}o{C}o)}, pages 116--123, 1991.

\bibitem{FeigeLovasz1992}
U.~Feige and L.~Lov{\'a}sz.
\newblock Two-prover one-round proof systems: Their power and their problems.
\newblock In {\em Proceedings of the 24th Annual ACM Symposium on Theory of
  Computing ({STOC})}, pages 733--744, 1992.

\bibitem{Hastad2001}
J.~H{\aa}stad.
\newblock Some optimal inapproximability results.
\newblock {\em Journal of the ACM}, 48(4):798--859, 2001.

\bibitem{JainJiUpadhyayWatrous2009}
R.~Jain, Z.~Ji, S.~Upadhyay, and J.~Watrous.
\newblock $\mbox{QIP = PSPACE}$.
\newblock In {\em Proceedings of the 42nd ACM Symposium on Theory of Computing
  ({STOC})}, pages 573--582, 2010.

\bibitem{KempeKobayashiMatsumotoTonerVidick2008}
J.~Kempe, H.~Kobayashi, K.~Matsumoto, B.~Toner, and T.~Vidick.
\newblock Entangled games are hard to approximate.
\newblock In {\em Proceedings of the 49th Annual IEEE Symposium on Foundations
  of Computer Science ({FOCS})}, pages 447--456, 2008.

\bibitem{KempeKobayashiMatsumotoVidick2009}
J.~Kempe, H.~Kobayashi, K.~Matsumoto, and T.~Vidick.
\newblock Using entanglement in quantum multi-prover interactive proofs.
\newblock {\em Computational Complexity}, 18(2):273--307, 2009.

\bibitem{KempeRegevToner2010}
J.~Kempe, O.~Regev, and B.~Toner.
\newblock Unique games with entangled provers are easy.
\newblock {\em SICOMP}, 39(17):3207--3229, 2010.

\bibitem{KobayashiMatsumoto2003}
H.~Kobayashi and K.~Matsumoto.
\newblock Quantum multi-prover interactive proof systems with limited prior
  entanglement.
\newblock {\em J. Comput. Syst. Sci}, 66(3):429--450, 2003.

\bibitem{KochenSpecker1967}
S.~Kochen and E.~Specker.
\newblock The problem of hidden variables in quantum mechanics.
\newblock {\em Journal of Mathematics and Mechanics}, 17:59--87, 1967.

\bibitem{LapidotShamir1991}
D.~Lapidot and A.~Shamir.
\newblock Fully parallelized multi prover protocols for {NEXP}-time.
\newblock In {\em Proceedings of the 32th Annual IEEE Symposium on Foundations
  of Computer Science ({FOCS})}, pages 13--18, 1991.

\bibitem{LeungTonerWatrous2008}
D.~Leung, B.~Toner, and J.~Watrous.
\newblock Coherent state exchange in multi-prover quantum interactive proof
  systems.
\newblock Avaliable at ar{X}iv.org eprint archive, arXiv:0804.4118v1
  [quant-ph].

\bibitem{Mermin1990}
N.~D. Mermin.
\newblock Quantum mysteries revisited.
\newblock {\em American Journal of Physics}, 58:731--734, 1990.

\bibitem{NielsenChuang2000}
M.~Nielsen and I.~Chuang.
\newblock {\em {Q}uantum {C}omputation and {Q}uantum {I}nformation}.
\newblock {C}ambridge {U}niversity {P}ress, Cambridge, 2000.

\bibitem{Peres}
A.~Peres.
\newblock Incompatible results of quantum measurements.
\newblock {\em Physical Review Letters A}, 151:107--108, 1990.

\bibitem{Strauman2008}
N.~Straumann.
\newblock A simple proof of the {K}ochen-{S}pecker theorem on the problem of
  hidden variables.
\newblock {\em Annalen der Physik}, 19:121--127, 2009.

\bibitem{vanDamHayden2002}
W.~van Dam and P.~Hayden.
\newblock Universal entanglement transformations without communication.
\newblock {\em Phys. Rev. A}, 67(6):060302, Jun 2003.

\bibitem{Wehner2006}
S.~Wehner.
\newblock Entanglement in interactive proof systems with binary answers.
\newblock In {\em Proceedings of the 23rd Symposium on Theoretical Aspects of
  Computer Science ({STACS})}, volume 3884 of {\em Lecture Notes in Computer
  Science}, pages 162--171. Springer, 2006.

\end{thebibliography}

\end{document}